\theoremstyle{plain}
\newtheorem{definition}{Definition}[section]
\newtheorem{lemma}[definition]{Lemma}
\newtheorem{proposition}[definition]{Proposition}
\newtheorem{remark}[definition]{Remark}
\newcommand{\boxend}{\flushright{$\Box$}}
\newcommand{\br}{{\mathbb R}}
\newcommand{\bt}{{\mathbb T}}
\newcommand{\al}{\alpha}
\renewcommand{\epsilon}{\varepsilon}
\newcommand{\ep}{\varepsilon}
\renewcommand{\phi}{\varphi}
\newcommand{\cala}{{\mathcal A}}
\newcommand{\calb}{{\mathcal B}}
\newcommand{\calc}{{\mathcal C}}
\newcommand{\calh}{{\mathcal H}}
\newcommand{\calo}{{\mathcal O}}
\newcommand{\rmi}{{\rm i}}
\newcommand{\rme}{{\rm e}}
\newcommand{\id}{\mbox{\rm Id}}
\renewcommand{\tilde}{\widetilde}
\title{A dynamical systems approach to Bohmian trajectories}
\author{F. Borondo\thanks{f.borondo@uam.es}\\
  Departamento de Qu\'{\i}mica, and \\
  Instituto Mixto CSIC--UAM--UC3M--UCM,\\
  Universidad Aut\'onoma de Madrid,\\
  Cantoblanco--28049 Madrid, Spain  \and
  A. Luque\thanks{alejandro.luque@upc.edu} and
  J. Villanueva\thanks{jordi.villanueva@upc.edu}\\
  Departament de Matem\`atica Aplicada I,\\
  Universitat Polit\`ecnica de Catalunya,\\
  08028 Barcelona, Spain \and
  D. A. Wisniacki\thanks{wisniacki@df.uba.ar}\\
  Departamento de F\'{\i}sica ``J. J. Giambiagi'', \\
  FCEN, UBA, Pabell\'on 1, Ciudad Universitaria, \\
  1428 Buenos Aires, Argentina}
\begin{document}

\maketitle

\thispagestyle{empty}
\newpage
\begin{abstract}
Vortices are known to play a key role in the dynamics of the
quantum trajectories defined within the framework of the de
Broglie-Bohm formalism of quantum mechanics.
It has been rigourously proved that the motion of a vortex
in the associated velocity field can induce chaos in these
trajectories,
and numerical studies have explored the rich variety of
behaviors that due to their influence can be observed.
In this paper, we go one step further and show how the theory
of dynamical systems can be used to construct a general and
systematic classification of such dynamical behaviors.
This should contribute to establish some firm grounds on
which the studies on the intrinsic stochasticity of Bohm's
quantum trajectories can be based.
An application to the two dimensional isotropic harmonic
oscillator is presented as an illustration.
\end{abstract}

\markboth{Bohmian trajectories from a Dynamical Systems point of view}
{F. Borondo, A. Luque, J. Villanueva and D.A. Wisniacki}

\newpage

\section{Introduction}\label{sec:intro}

Some interpretational difficulties~\cite{Feynman} with the standard
version~\cite{vonNeuman} led David Bohm to develop in the 1950's
\cite{Bohm} an alternative formulation of quantum mechanics.
Despite initial criticisms, this theory has recently received
much attention \cite{Holland,Wyattbook},
having experimented in the past few years an important revitalization,
supported by a new computationally oriented point of view.
In this way, many interesting practical applications,
including the analysis of the tunnelling mechanism
\cite{Makri,LiuMakri,LopreoreW99},
scattering processes \cite{Gonzalez,Beswick,SanzBM04},
or the classical--quantum correspondence \cite{PrezhdoB01,SanzBM01},
just to name a few, have been revisited using this novel point of view.
Also, the chaotic properties of these trajectories
\cite{Frisk97,WisniackiP05,WisniackiPB05,WisniackiPB07,Efthymiopoulos},
or more fundamental issues,
such as the extension to quantum field theory \cite{DurrGTZ04},
or the dynamical origin of Born's probability rule \cite{ValentiniW05}
(one of the most fundamental cornerstones of the quantum theory)
have been addressed within this framework .

Most interesting in Bohmian mechanics is the fact that this
theory is based on quantum trajectories, ``piloted'' by the
de Broglie's wave which creates a (quantum) potential term
additional to the physical one derived from the actual forces
existing in the system \cite{Bohm}.
This term brings into the theory interpretative capabilities
in terms of intuitive concepts and ideas, which are naturally deduced
due to fact that quantum trajectories provide causal connections
between physical events well defined in configuration and time.
Once this ideas have been established as the basis of many numerical studies,
it becomes, in our opinion, of great importance to provide
firm dynamical grounds that can support the arguments based on
quantum trajectories.
For example, it has been recently discussed that the chaotic properties
of quantum trajectories are critical for a deep understanding of
Born's probability quantum postulate, considering it as an emergent
property \cite{ValentiniW05}.
Unfortunately very little progress,
i.e. rigorous formally proved mathematical results,
has taken place along this line due to the lack of a solid theory
that can foster this possibility.
Moreover, there are cases in the literature clearly demonstrating
the dangers of not proceeding in this way.
One example can be found in Ref.~\cite{polavieja}, where a chaotic
character was ascribed to quantum trajectories for the quartic potential,
supporting the argument solely on the fact that numerically computed
neighboring pairs separate exponentially.
This analysis was clearly done in a way in which the relative importance
of the quantum effects could not be gauged.
Something even worse happened with the results reported in \cite{brasileno},
that were subsequently proved to be wrong in a careful
analysis of the trajectories \cite{brasileno2}.

Recently, some of the authors have made
in Refs.~\cite{WisniackiP05,WisniackiPB05,WisniackiPB07}
what we consider a relevant advance along the line
proposed in this paper,
by considering the relationship between the eventual
chaotic nature of quantum trajectories and the vortices existing in
the associated velocity field which is given by the quantum potential,
a possibility that had been pointed out previously by
Frisk \cite{Frisk97}.
Vortices has always attracted the interest of scientists from many
different fields.
They are associated to singularities at which certain mathematical
properties become infinity or abruptly change,
and play a central role to explain many interesting phenomena both
in classical and quantum physics \cite{Berry00}.
In these papers it was shown that
quantum trajectories are, in general, intrinsically chaotic,
being the motion of the velocity field vortices a sufficient mechanism
to induce this complexity \cite{WisniackiP05}.
In this way, the presence of a single moving vortex, in an otherwise
classically integrable system, is enough to make quantum trajectories chaotic.
When two or few vortices exist, the interaction among them
may end up in the annihilation or creation of them in pairs with
opposite vorticities. These phenomenon makes that the size of the
regular regions in phase space grows as vortices disappear \cite{WisniackiPB07}.
Finally, it has been shown that when a great number of vortices are present
the previous conclusions also hold, and they statistically combine in
such a way that they can be related with a suitably defined Lyapunov
exponent, as a global numerical indicator of chaos in the
quantum trajectories \cite{WisniackiPB05}.
Summarizing, this makes of chaos the general dynamical scenario for
quantum trajectories, and this is due to the existence and motion
of the vortices of the associated velocity field.

In this paper, we extend and rigorously justify the numerical results
in~\cite{WisniackiPB05,WisniackiPB07,WisniackiP05}
concerning the behavior of quantum trajectories and its structure
by presenting the general analysis of a particular problem of general interest,
namely a two--dimensional harmonic oscillator,
where chaos does not arise from classical reasons.
In this way, we provide a systematic classification of all possible
dynamical behaviors of the existing quantum trajectories,
based on the application of dynamical systems theory
\cite{dynsys}.
This classification provides a complete ``road--map'' which makes possible a
deep understanding, put on firm grounds, of the dynamical structure for
the problem being addressed.

\section{Bohmian mechanics and quantum trajectories}

The Bohmian mechanics formalism of quantum trajectories starts from
the suggestion made by Madelung of writing the wave function in polar form
\begin{equation}
  \psi(r,t)=R(r,t) \rme^{\rmi S(r,t)}, \nonumber
\end{equation}
where $R^2=\overline{\psi} \psi$ and
$S = (\ln \psi-\ln \overline{\psi})/(2\rmi)$
are two real functions of position and time.
For convenience, we set $\hbar=1$ throughout the paper, and consider a
particle of unit mass.
Substitution of this expression into the time-dependent Schr\"odinger
equation allows to recast the quantum theory into a ``hydrodynamical'' form
\cite{Wyattbook}, which is governed by,
\begin{align}
\frac{\partial R^2}{\partial t} & = - \nabla \cdot \bigg( R^2\nabla S\bigg),
  \label{eq:bohmPDE1}\\
\frac{\partial S}{\partial t} & = -\frac{(\nabla S)^2}{2} - V -
\frac{1}{2}\frac{\nabla^2 R}{R},
  \label{eq:bohmPDE2}
\end{align}
which are the continuity and the ``quantum'' Hamilton-Jacobi equations,
respectively.
The qualifying term in the last expression is customarily included since
this equation contains an extra non-local contribution
(determined by the quantum state),
$Q=\frac{1}{2}\nabla^2 R/R$, called the ``quantum'' potential.
Together with $V$, this additional term determines the total forces
acting on the system, and it is responsible for the so-called quantum
effects in the dynamics of the system.

Similarly to what happens in the standard Hamilton-Jacobi theory,
Eqs.~\eqref{eq:bohmPDE1} and~\eqref{eq:bohmPDE2} allow to define,
for spinless particles, quantum trajectories by integration of the
differential equations system: $\ddot{r}=-\nabla V(r)-\nabla Q(r)$.
Alternatively, one can consider the velocity vector field
\begin{equation}\label{eq:Vfield}
  X_\psi  = \nabla S = \frac{\rmi}{2}\frac{\psi \nabla
  \overline{\psi}-\overline{\psi} \nabla \psi}{|\psi|^2}.
\end{equation}
Notice that, in general, this Bohmian vector field is not Hamiltonian,
but it may nevertheless have some interesting properties.
In particular, for the example considered in this paper it will be
shown that it is time-reversible, this symmetry allowing the study
of its dynamics in a systematic way.

Let us recall that a system, $\dot{r}=X(r,t)$, is time-reversible
if there exists an involution, $r=\Theta(s)$, that is a change of
variables satisfying $\Theta^2 = \id$ and $\Theta \neq \id$,
such that the new system results in
$\dot{s}=D\Theta^{-1}(s) X(\Theta(s),t)=-X(s,t)$.
One of the dynamical consequences of reversibility is that if
$r(t)$ is a solution, then so it is $\Theta(r(-t))$.
This fact introduces symmetries in the system giving rise to
relevant dynamical constraints.
For example, let us assume that $\Theta(x,y)=(x,-y)$ is a
time-reversible symmetry (see Fig.~\ref{fig:reversible}).
Then any solution $r(t)=(x(t),y(t))$ defines another solution
given by $(x(-t),-y(-t))$.
Let us remark that this fact constraints the system dynamics since if,
for example, $r(t)$ crosses the symmetry axis ($y=0$ is invariant under
$\Theta$) then the two solutions must coincide.

\begin{figure}
\centering
\psfrag{x}{{\footnotesize $r_1(t)$}}
\psfrag{tx}{{\footnotesize$\Theta(r_1(-t))$}}
\psfrag{y}{{\footnotesize$r_2(t)$}}
\psfrag{Ty}{{\footnotesize$\Theta(r_2(-t))$}}
\psfrag{eje}{{\footnotesize$y=0$}}
\includegraphics[width=7cm,height=6cm]{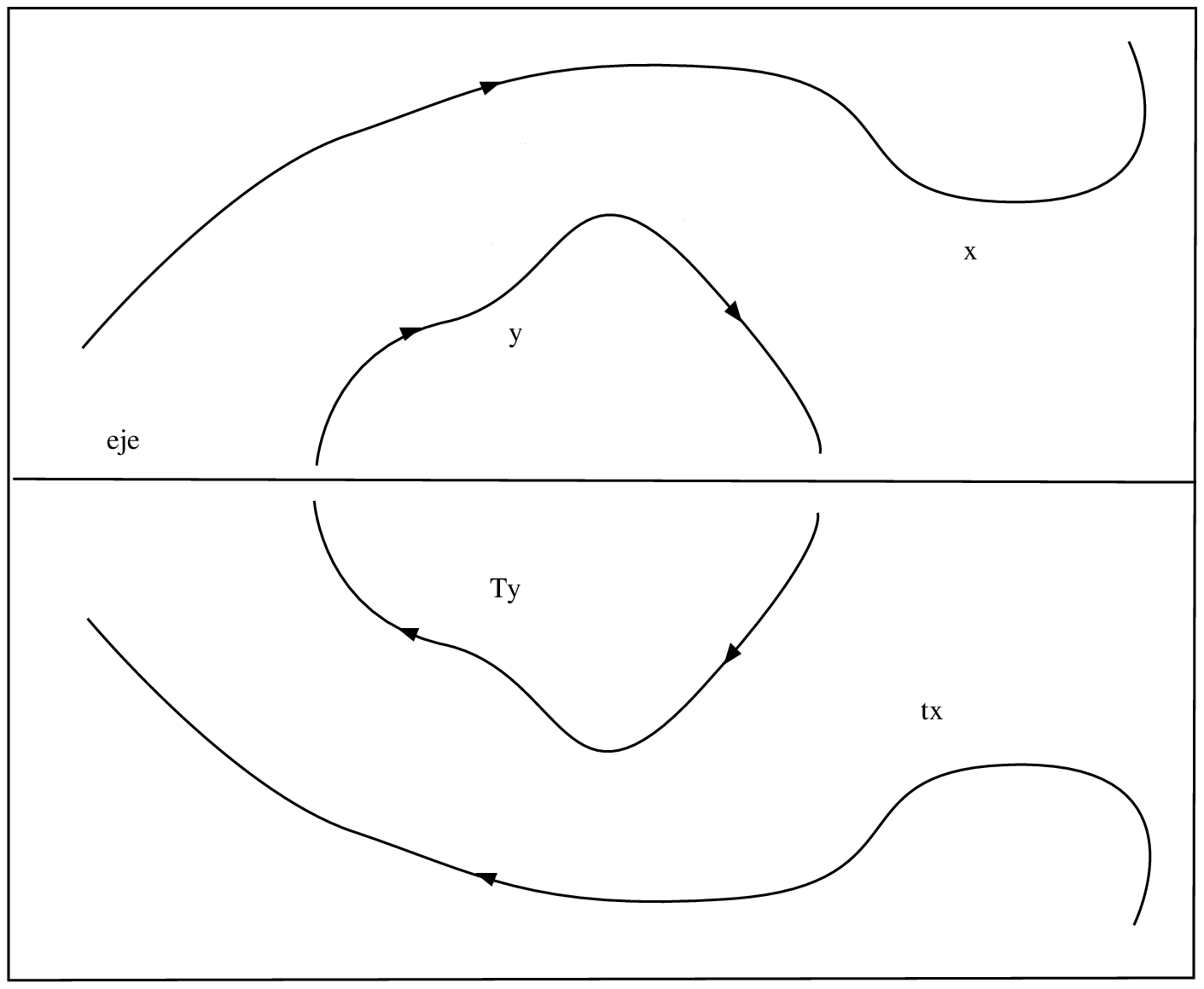}
\caption{{\footnotesize Illustration of the dynamical consequences of a
time--reversible symmetry.}}
  \label{fig:reversible}
\end{figure}

We conclude this section by stressing that time-reversible systems
generated a lot of interest during the 80's due to the fact that
they exhibit most of the properties of Hamiltonian systems
(see \cite{Arnold84,Broer96,Sevryuk07}).
In particular, this type of systems can have quasi-periodic tori
which are invariant under both the flow and the involution $\Theta$.
That is, KAM theory fully applies in this context.
Furthermore, some interesting results concerning the splitting
of separatrices have been developed successfully for time-reversible
systems~\cite{Lazaro03}, providing powerful tools for the study of
homoclinic and heteroclinic chaos.

\section{Model and canonical form}\label{sec:model}

The system that we choose to study is the two dimensional isotropic
harmonic oscillator.
Without loss of generality, the corresponding Hamiltonian operator for
$r=(x,y)$ can be written in the form
$$
\hat H(x,y) = -\frac{1}{2} \bigg(\frac{\partial^2}{\partial x^2} +
\frac{\partial^2}{\partial y^2}\bigg) + \frac{1}{2} (x^2+y^2).
$$
In this paper, we consider the particular combination of eigenstates:
$\phi_{0,0}=1/\sqrt{\pi}$ with energy $1$, and
$\phi_{1,0} = 2x/\sqrt{2\pi}$, $\phi_{0,1} = 2y/\sqrt{2\pi}$ with energy $2$.
It can be immediately checked that the time evolution of the resulting
wave function is given by
\begin{equation}\label{eq:wave:ini}
\psi = \bigg( \frac{\cala \rme^{-\rmi t}}{\sqrt{\pi}}  +
  \frac{2 x \calb \rme^{-2\rmi t}}{\sqrt{2\pi}} +
  \frac{ 2 y \calc \rme^{-2 \rmi t}}{\sqrt{2\pi}}  \bigg)
  \rme^{-\frac{1}{2}(x^2+y^2)},
\end{equation}
where\footnote{The choice of notation for the real and imaginary parts
of $\calc$ may look arbitrary at this point, but it makes simpler
the notation for the canonical form introduced in the next section.}
$\cala=A +\rmi D$, $\calb = B + \rmi E$ and $\calc = F+\rmi C$,
subject to the usual normalization condition $|\cala|^2 + |\calb|^2 +|\calc|^2 = 1$.
In addition, we further assume the condition $BC \neq EF$ in order to
ensure the existence of a unique vortex in the velocity field at any time.
Accordingly, the quantum trajectories associated to~\eqref{eq:wave:ini}
are solutions of the system of differential equations:
\begin{eqnarray}\label{eq:case3}
\dot{x} & = & \frac{-2(BC-EF)y-\sqrt{2}(BD-AE) \cos t -
\sqrt{2}(AB+DE) \sin
t}{V(x,y,t)}, \label{eq:case3:1} \\
\dot{y} & = & \frac{2(BC-EF)x+\sqrt{2}(AC-DF)\cos t -
\sqrt{2}(DC+AF) \sin t}{V(x,y,t)}, \label{eq:case3:2}
\end{eqnarray}
where
\begin{eqnarray*}
V(x,y,t) & = & 2 (B^2+E^2)x^2 + 2 (C^2+F^2) y^2 + 4 (BF+EC) xy + D^2+A^2\\
         &   & +2\sqrt2((AB+DE)\cos t + (AE-DB) \sin t)x\\
         &   & +2\sqrt2((DC+AF)\cos t + (AC-DF) \sin t)y.
\end{eqnarray*}
To integrate this equation a 7/8--th order Runge-Kutta-Fehlberg
method has been used.
Moreover, since the vector field is periodic, the dynamics can
be well monitored by using stroboscopic sections.
In particular, we plot the solution $(x(t),y(t))$
at times $t=2\pi n$ for $n=1,2,\ldots,10^4$ and for several initial
conditions.

\begin{figure}
\centering
\includegraphics[width=6cm,height=6cm,angle=-90]{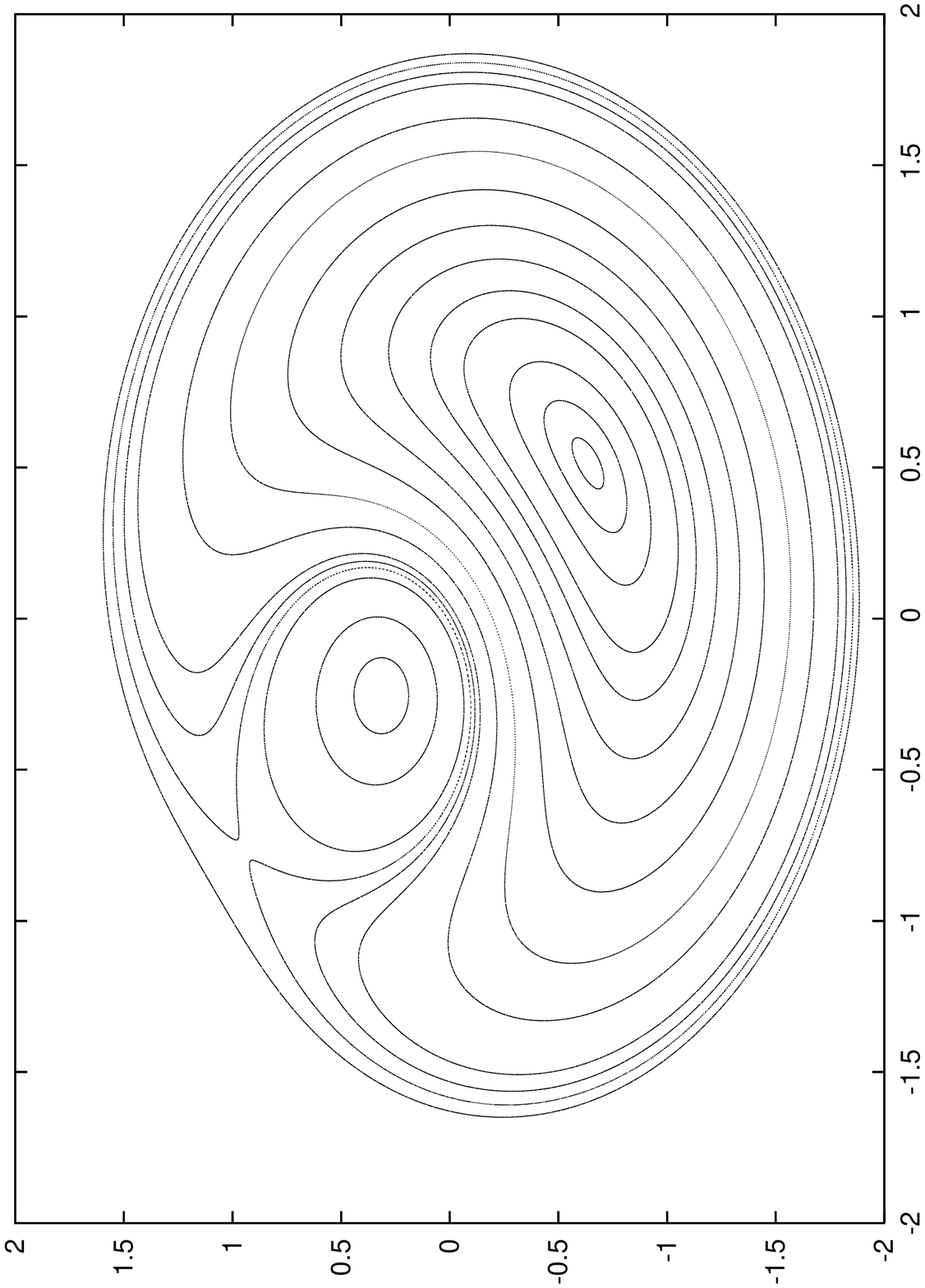}
\includegraphics[width=6cm,height=6cm,angle=-90]{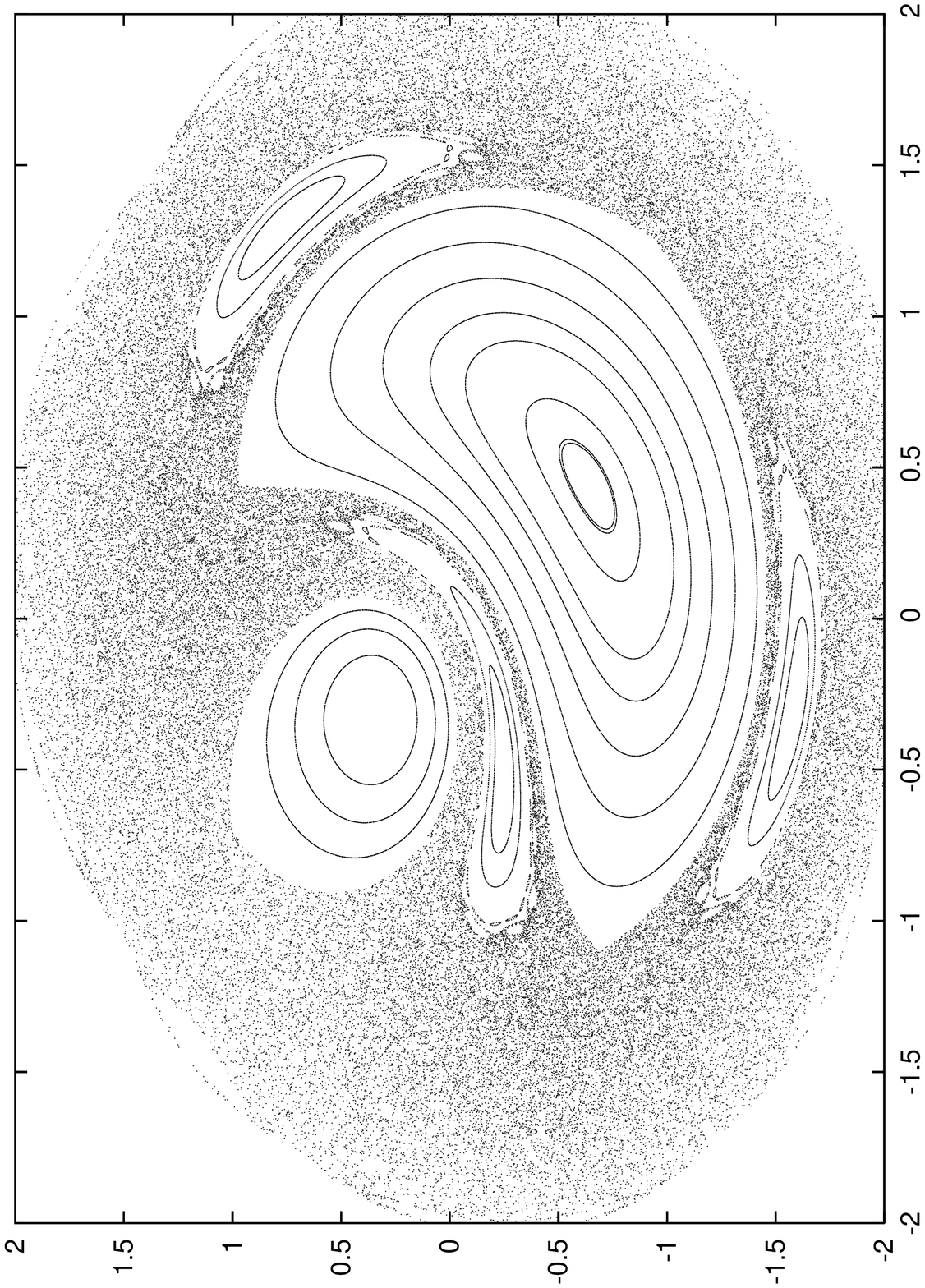}
\caption{{\footnotesize Stroboscopic $2\pi$-periodic sections
  for the quantum trajectories generated by Eqs.~\eqref{eq:case3}
  and \eqref{eq:case3:2} for different values of the normalized
  constants. \protect\\
  Left plot: $A = 0.37$, $D =-0.02$, $B = C = 0.44$ and $E = -F$.
    \protect\\
  Right plot: $A=0.4$, $D=-0.018$, $B=0.37$ and $C=E=0.49$.}}
  \label{fig:phase:noca}
\end{figure}

In Fig.~\ref{fig:phase:noca} we show the results of two such stroboscopic
sections.
As can be seen the left plot corresponds to completely integrable motions,
whereas in the right one sizeable chaotic zones coexisting with
stability islands, this strongly suggesting the applicability of the
KAM scenario.
However, our vector field is neither Hamiltonian nor time-reversible,
and then the KAM theory does not directly apply to this case.
However, we will show how a suitable change of variables can be performed
that unveils a time-reversible symmetry existing in our vector field.
For this purpose, we first recall that the structure of gradient vector
fields is preserved under orthogonal transformations.
In this way, if we consider the transformation $r=M s$, with $M^T = M^{-1}$,
applied to $\dot{r}=\nabla S(r,t)$,
we have that $\dot{s}=\nabla \tilde S(s, t)$, being $\tilde S(s,t)=S(Ms,t)$.
In other words, any orthogonal transformation can be performed
on the wave function instead of on the vector field.

\begin{lemma}\label{lem:cano}
If Eq.~\eqref{eq:wave:ini} satisfies the non-degeneracy
condition $BC\neq EF$,
then there exist an orthogonal transformation and a time shift,
such that the wave function takes the form
\begin{equation}\label{eq:wave:lem2}
\psi = \bigg(  \frac{\hat A \rme^{-\rmi t}}{\sqrt{\pi}}  +
\frac{2 x \hat B \rme^{-2\rmi t}}{\sqrt{2\pi}}  +
\frac{ 2 y \rmi \hat C\rme^{-2 \rmi t}}{\sqrt{2\pi}}  \bigg)
\rme^{-\frac{1}{2}(x^2+y^2)},
\end{equation}
where $\hat A, \hat B, \hat C \in \br$, $\hat B >0$, $\hat C \neq 0$
satisfy $\hat A^2 + \hat B^2 +\hat C^2 = 1$.
\end{lemma}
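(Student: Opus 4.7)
Plan: I would exploit three degrees of freedom acting on the six real coefficients $A,B,C,D,E,F$ of~\eqref{eq:wave:ini}: an orthogonal change of variables $r = Ms$, a time shift $t \mapsto t + \tau$, and a constant global phase $\rme^{\rmi\alpha}$ on $\psi$. The last is harmless for Bohmian dynamics since it only adds a constant to $S$, leaving $\nabla S$ invariant, and is presumably implicit in the statement (a quick dimension count shows the other two operations alone offer one too few parameters). Tracking their effect on $\cala,\calb,\calc$: an orthogonal transformation $R(\theta)\in SO(2)$ fixes $\cala$, because $x^2+y^2$ is rotation-invariant, and sends $(\calb, \calc) \mapsto (\calb\cos\theta + \calc\sin\theta,\, -\calb\sin\theta + \calc\cos\theta)$; the combination of time shift and global phase multiplies $\cala$ by $\rme^{\rmi\beta_1}$ and $(\calb, \calc)$ by $\rme^{\rmi\beta_2}$, where $\beta_1 = \alpha - \tau$ and $\beta_2 = \alpha - 2\tau$ can be prescribed independently.

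I would then isolate the $(\calb, \calc)$-sector by encoding the real and imaginary parts of $\calb = B + \rmi E$ and $\calc = F + \rmi C$ in the real $2 \times 2$ matrix
\[
N = \begin{pmatrix} B & F \\ E & C \end{pmatrix},
\]
whose determinant $BC-EF$ is nonzero by hypothesis. A short computation shows that the combined action of the rotation and the phase $\rme^{\rmi\beta_2}$ is $N \mapsto R(\beta_2)\, N\, R(\theta)$, where $R(\cdot)$ denotes the planar rotation matrix, while the target canonical form $\calb = \hat B > 0$, $\calc = \rmi\hat C$ with $\hat C \neq 0$ amounts exactly to $N = \operatorname{diag}(\hat B, \hat C)$. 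I would realize this via a singular value decomposition with orthogonal factors restricted to $SO(2)$: starting from a standard SVD $N = U_0 \Sigma V_0^T$ with $U_0, V_0 \in O(2)$ and $\Sigma \geq 0$, the determinantal signs can be absorbed by flipping the sign of one singular value, yielding $U, V \in SO(2)$ and scalars $s_1 > 0$, $s_2 \neq 0$ with $U^T N V = \operatorname{diag}(s_1, s_2)$. Choosing $\beta_2$ and $\theta$ so that $R(\beta_2) = U^T$ and $R(\theta) = V$ brings $N$ to the desired diagonal form with $\hat B = s_1$ and $\hat C = s_2$.

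The remaining freedom $\beta_1$ is then used to enforce $\cala \rme^{\rmi\beta_1} \in \br$, defining $\hat A$. The normalization $\hat A^2 + \hat B^2 + \hat C^2 = 1$ follows at once because each of the three operations preserves $|\cala|^2 + |\calb|^2 + |\calc|^2$.

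The main technical subtlety, and what explains the asymmetric conditions $\hat B > 0$ versus $\hat C \neq 0$ in the statement, is the signed character of the SVD one must use here: insisting on $SO(2)$ factors rather than $O(2)$ forces one singular value to potentially carry a minus sign, and this sign is precisely the sign of $\hat C$.
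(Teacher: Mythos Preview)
Your argument is correct and complete: the action $N\mapsto R(\beta_2)\,N\,R(\theta)$ is right, the signed $SO(2)$--SVD does exactly what is needed for an invertible $2\times2$ matrix, and the residual phase $\beta_1$ handles $\cala$. You were also right to flag the global phase as an implicit third degree of freedom.

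The paper takes a different route. It passes to the complex coordinate $z=x+\rmi y$, rewriting the linear part of $\psi$ as $\hat\cala+\hat\calb z+\hat\calc\bar z$ with $\sqrt{2\pi}\,\hat\calb=B+C+\rmi(E-F)$ and $\sqrt{2\pi}\,\hat\calc=B-C+\rmi(E+F)$. In these variables a spatial rotation and a time shift act on $\hat\calb$ and $\hat\calc$ by \emph{independent} unimodular factors, so one simply rotates each to the positive real axis; the global phase is then discarded explicitly. The resulting $\hat B,\hat C$ are recovered as $\sqrt{\pi/2}\,(|\hat\calb|\pm|\hat\calc|)$. Along the way the paper computes the vortex trajectory and observes that the transformation is precisely what aligns the axes of that ellipse with the coordinate axes. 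Your SVD and the paper's phase-killing are in fact the same computation in disguise (the quantities $|\hat\calb|\pm|\hat\calc|$ are exactly the signed singular values of your $N$), but the emphases differ: your approach is cleaner linear algebra and makes the sign asymmetry $\hat B>0$, $\hat C\neq0$ transparent, while the paper's approach foregrounds the geometry of the vortex ellipse, which matters for the dynamics studied later.
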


We will refer to the wave function~\eqref{eq:wave:lem2} as the canonical
form of~\eqref{eq:wave:ini}, and the rest of the paper is devoted to the
study of this case.
For this reason, the hat in the coefficients will be omitted,
since it is understood that $D=E=F=0$.
In Table~\ref{tab:prop} we give the actual values the canonical
coefficients after the transformation corresponding to the results
in Fig.~\ref{fig:phase:noca}.

\begin{proof}
For convenience, we consider the complexified phase space
$z=x + \rmi y$, so that the wave function~\eqref{eq:wave:ini}
results in
$$
\psi(z,\bar z,t)= \bigg(\hat \cala \rme^{-\rmi t} + \hat \calb
\rme^{-2 \rmi t} z + \hat \calc \rme^{-2\rmi t} \bar z \bigg)
\rme^{-\frac{1}{2} z \bar z},
$$
where $\sqrt{\pi} \hat \cala = A + \rmi D$, $\sqrt{2\pi} \hat \calb = B + C +\rmi (E-F)$,
and $\sqrt{2\pi} \hat \calc =B - C +\rmi (E+F)$.
Then, it is easy to check that the vortex, i.e. the set of points where
the wave function vanishes, has the following position with respect to time
$$
z_v(t) = \frac{-|\hat \cala| |\hat
\calb|\rme^{\rmi(t-b+a)} + |\hat \cala||\hat
\calc|\rme^{-\rmi(t-c+a)}}{|\hat \calb|^2-|\hat
\calc|^2},
$$
where $\hat \cala = |\hat \cala|\rme^{\rmi a}$, $\hat \calb = |\hat
\calb|\rme^{\rmi b}$ and $\hat \calc = |\hat \calc| \rme^{\rmi c}$.
Notice that the vortex is well defined thanks to the non-degeneracy assumption,
and its trajectory\footnote{We use here the term trajectory to refer to the
evolution of the vortex, despite the fact that it is not a solution of the ODE.}
follows an ellipse.
This ellipse does not appear in the usual canonical form,
but this can be made so by performing the rotation: $z \mapsto z \rme^{\rmi \mu}$
and the time shift: $t \mapsto t + \lambda$.
In this way
$$
z_v(t) = \frac{-|\hat \cala| |\hat
\calb|\rme^{\rmi(t-b+a-\mu+\lambda)} + |\hat \cala||\hat
\calc|\rme^{-\rmi(t-c+a+\mu+\lambda)}}{|\hat \calb|^2-|\hat
\calc|^2},
$$
where it is clear that by choosing $2\mu = c-b$ and $2\lambda = c+b-2a$,
the desired result is obtained.
Then, the corresponding wave function in these new coordinates is
$$
\psi = \bigg(|\hat \cala| \rme^{-\rmi t} + |\hat \calb| \rme^{-2\rmi t}
z + |\hat \calc| \rme^{-2\rmi t} \bar z \bigg)
\rme^{-\frac{1}{2} z \bar z} \rme^{\rmi (2a -
\frac{b+c}{2})},
$$
that can be further simplified since the factor $\rme^{\rmi(2a - \frac{b+c}{2})}$
plays no role in the Bohmian equations for the quantum trajectories.
Finally, by recovering the coefficients in cartesian coordinates, one obtains
$\hat A = \sqrt{\pi}|\hat \cala|$, $\hat B= \sqrt{\frac{\pi}{2}} (|\hat \calb|+
|\hat \calc|)$, $\quad \hat C= \sqrt{\frac{\pi}{2}} (|\hat \calb|-|\hat \calc|)$
and $\hat D=\hat E=\hat F=0$, which renders Eq.~\eqref{eq:wave:lem2}.
\end{proof}

\begin{table}[!t]
\centering
{\footnotesize
\begin{tabular}{ccc}
  \hline
    & Left plot & Right plot \\
  \hline
  $A$ & 0.370540146272978 & 0.400404795176082 \\
  $B$ & 0.656772411113622 & 0.705788460189184 \\
  $C$ & 0.656772411113622 & 0.584413081188110 \\
  \hline
\end{tabular}}
\caption{{\footnotesize Wave function coefficients in the canonical model
  corresponding to the results of Fig.~\protect\ref{fig:phase:noca}.}
  Hats have been omitted as discussed in the text.}
 \label{tab:prop}
\end{table}

\section{Study of the canonical form}\label{sec:cano}

Throughout the rest of the paper we consider the wave
function~\eqref{eq:wave:ini} with $A , C \neq0$, $B>0$ and $D=E=F=0$.
Let us remark that by changing the time $t \mapsto -t$, if necessary,
we can further restrict the study to the case $C>0$.
The corresponding quantum trajectories are then obtained from the
vector field
\begin{equation}\label{eq:case1}
X_\psi= \bigg( \frac{-2 BC y -\sqrt{2}A B \sin t}{V(x,y,t)},
\frac{2 B C x + \sqrt{2}A C \cos t }{V(x,y,t)}\bigg),
\end{equation}
where $V(x,y,t)= 2 B^2x^2 + 2 C^2 y^2 + 2 \sqrt{2} A B x \cos t + 2\sqrt{2}
A C y \sin t + A^2$.
In these coordinates, the only vortex of the system follows the
trajectory given by
$$
(x_v(t),y_v(t))=\bigg( -\frac{A}{\sqrt{2} B} \cos t,
-\frac{A}{\sqrt{2} C} \sin t \bigg),
$$
which corresponds to an ellipse of semi-axes $a = A/(\sqrt{2}B)$
and $b = A/(\sqrt{2}C)$, respectively.
%
\begin{figure}[!t] \centering
\includegraphics[width=6cm,height=6cm,angle=-90]{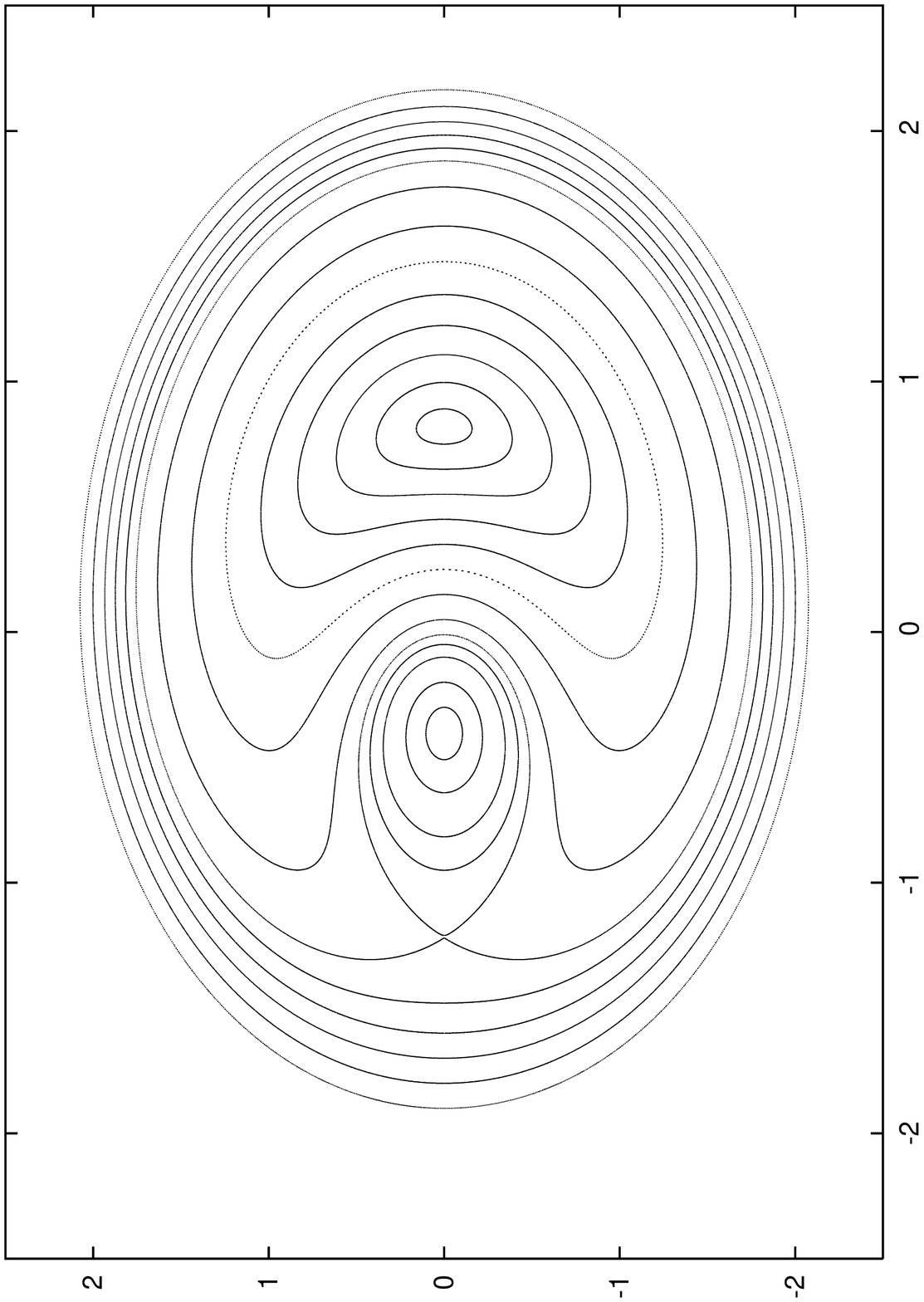}
\includegraphics[width=6cm,height=6cm,angle=-90]{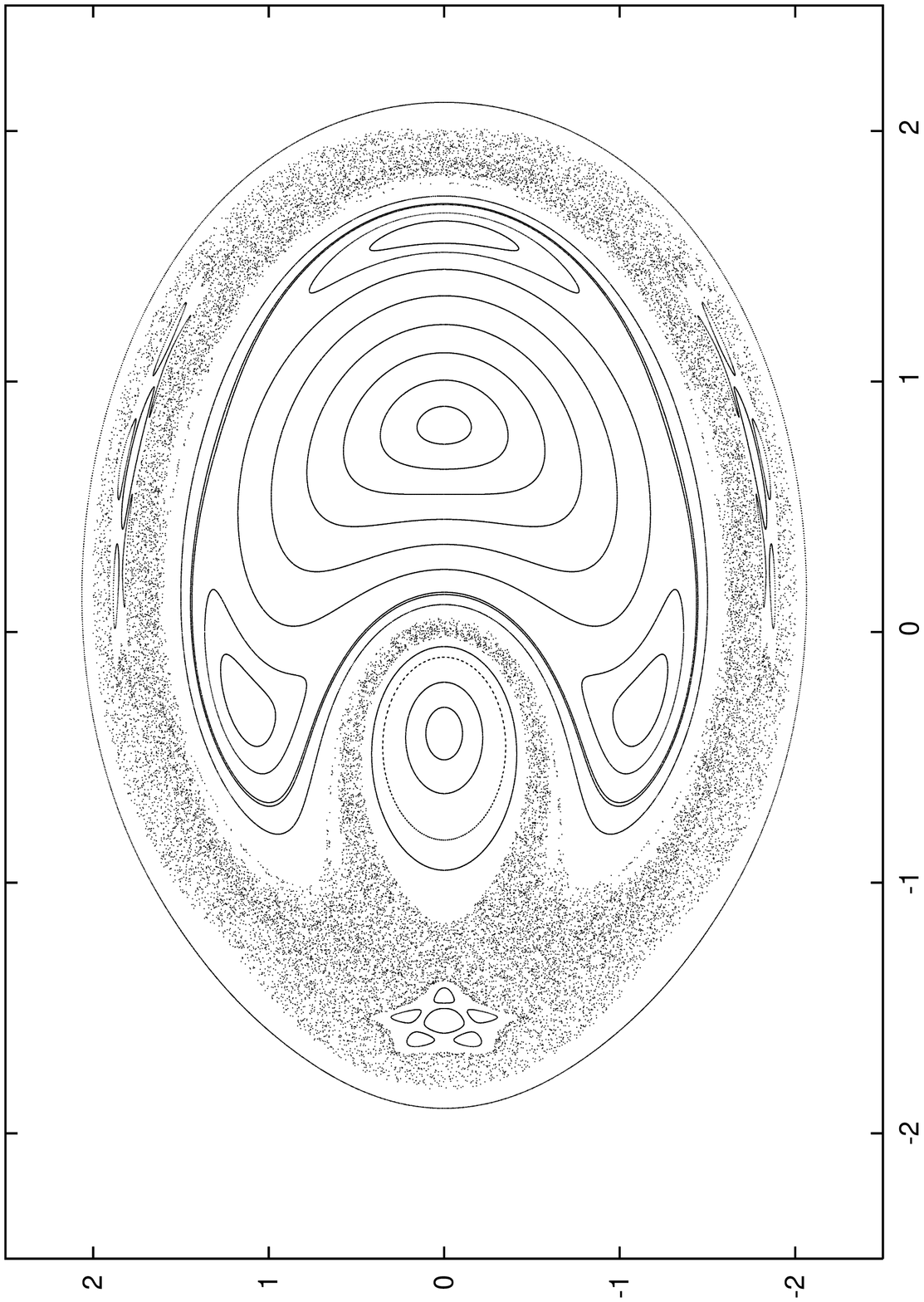}
\includegraphics[width=6cm,height=6cm,angle=-90]{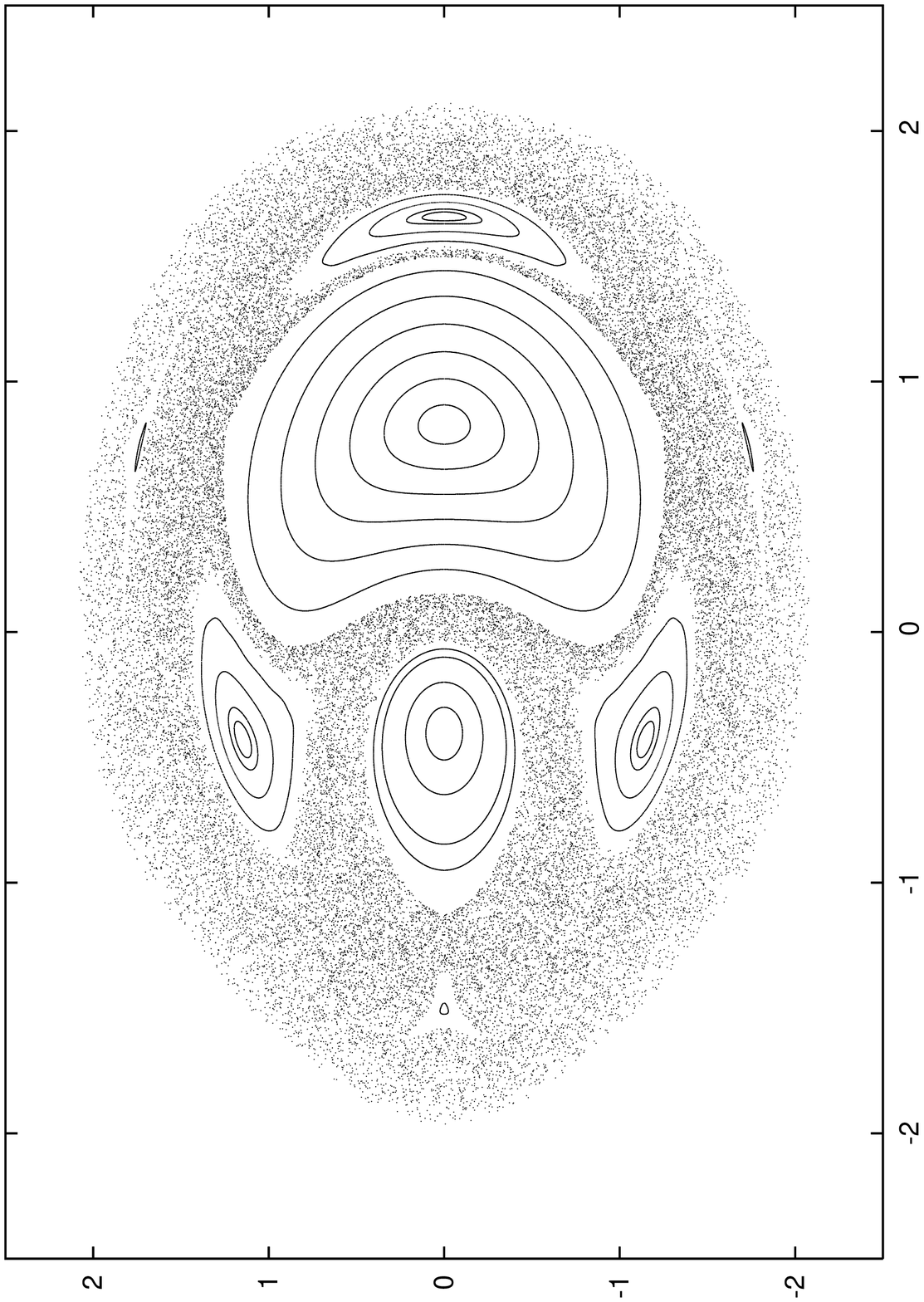}
\includegraphics[width=6cm,height=6cm,angle=-90]{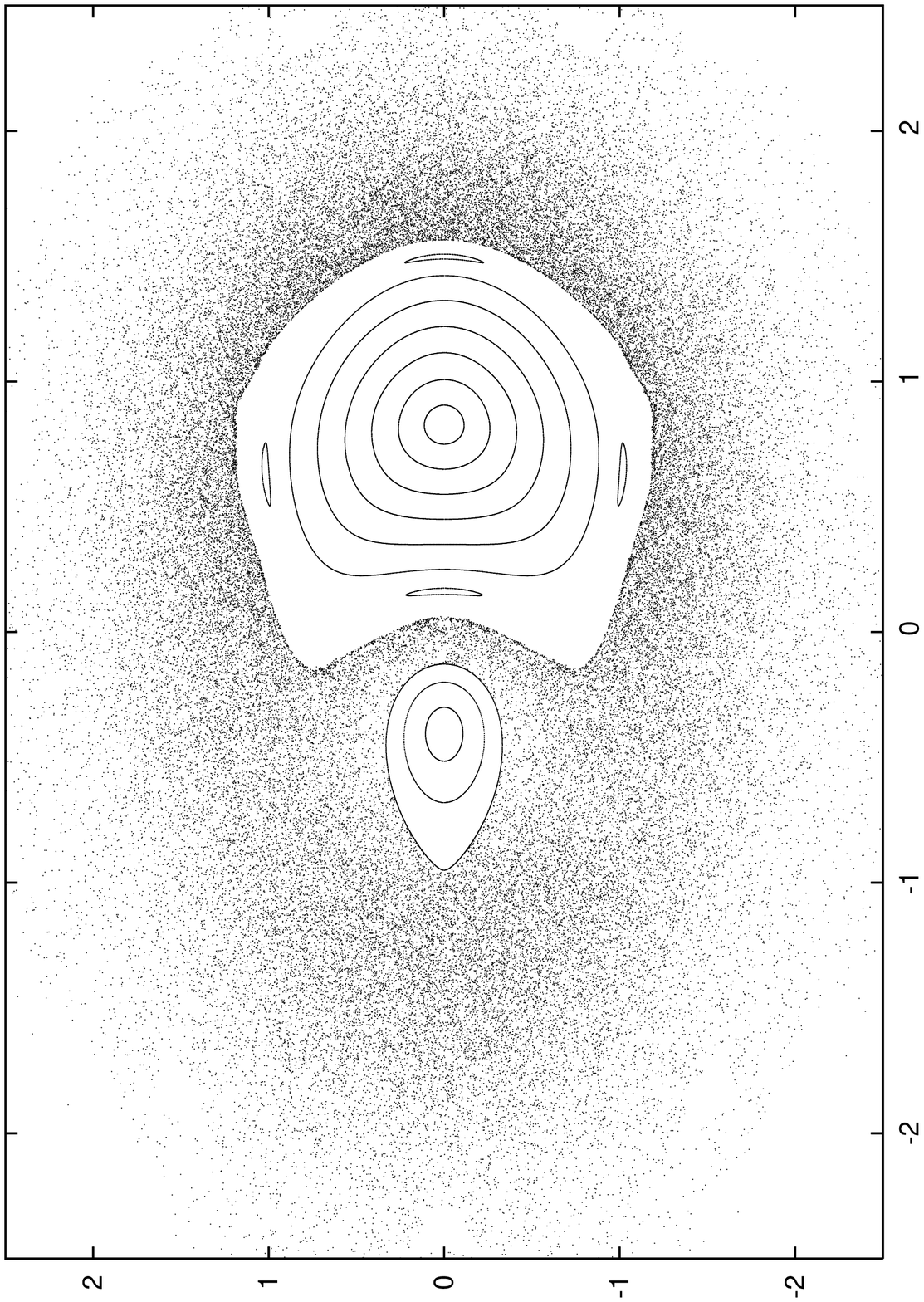}
\caption{{\footnotesize Stroboscopic $2\pi$-periodic sections
  corresponding to the quantum trajectories generated by the
  canonical velocity field defined by Eq.~(\ref{eq:case1}) for
  $a=0.4$ and, $b=0.4, 0.44, 0.48$ and $0.68$ from left-top to right-bottom,
  respectively.}} \label{fig:phase1}
\end{figure}

In Fig.~\ref{fig:phase1} we show some stroboscopic sections corresponding
to this (canonical) velocity field for different values of the
parameters $a$ and $b$.
As can be seen, a wide variety of dynamical behaviors,
characteristics of a system with mixed dynamics, is found.
In the left-top panel, which corresponds to the case in which $a=b$
(vortex moving in a circle),
we have sections corresponding to a totally integrable case.
As we move from left to right and top to bottom some of these tori
are broken, and these areas of stochasticity coexist with others in
which the motion is regular, this including different chains of
islands.
Moreover, the size of the chaotic regions grows as the value of
$b$ separates from that of $a$.

This variety of results can be well understood and rationalize by
using some standard techniques of the field of dynamical systems,
in the following way.
Although the vector field~\eqref{eq:case1} is not Hamiltonian,
it is time-reversible with respect to the involution $\Theta(x,y)=(x,-y)$.
This result is very important for the purpose of the present paper,
since it implies that the KAM theory applies to our system
if we are able to write down our vector field in the form
$X_\psi = X_0 + \ep X_1$, $\ep \ll 1$,
being the dynamics corresponding to $X_0$ integrable and $X_1$
time-reversible.
More specifically, let us assume that $X_0$ does not depend on $t$ and
$X_1$ be $2\pi$-periodic with respect to $t$.
Moreover, let us assume that for $X_0$ there exists a family of periodic orbits
whose frequency varies along the family (non-degeneracy condition).
Then, our result guarantees that when the effect of the perturbation $\ep X_1$
is considered, most of the previous periodic orbits give rise to invariant tori
of frequencies $(1,\omega)$, where $\omega$ is the frequency of the unperturbed
periodic orbit.
Of course, the persistence of these objects is conditioned to the fact that
the vector $(1,\omega)$ satisfies certain arithmetic conditions
(see~\cite{Broer96,Sevryuk07} for details).
Since these arithmetic conditions are fulfilled for a big (in the sense of
the Lebesgue measure) set of the initial orbits, the important hypothesis
that we have to check in order to ascertain the applicability of the KAM
theory is the non-degeneracy of the frequency map.

In our problem, two such integrable cases exists.
First, if $A=0$ the vortex is still at the origin and the time periodic
part in the vector field disappears.
As a consequence, all the quantum orbits of the system appear as ellipses
centered at the origin in the $xy$-plane.
It will be shown in the next section that the corresponding
frequency varies monotonically along the orbits.
This case has not been explicitly included in Fig.~\ref{fig:phase1}
due to its simplicity.
Second, and as will be analyzed in Sect.~\ref{ssec:nonautonomous},
if $B=C$, or equivalently $a=b$, that is the vortex moves in a circle,
the vector field is also integrable for any value of $A$.
The corresponding stroboscopic sections are shown in the top-left
panel of Fig.~\ref{fig:phase1}).
Here, the structure of the phase space changes noticeably,
since two new periodic orbits, one stable and the other unstable,
appear.
Moreover, the obtained integrable vector field depends on $t$.
We will show that this time dependence can be eliminated by means of a
suitable change of coordinates, showing that our problem remains in the
context described in the previous paragraph.
The rest of the panels in Fig.~\ref{fig:phase1} can be understood as
the evolution of this structure as the perturbation,
here represented by the difference between $B$ and $C$,
as dictated by the KAM theorem.

To conclude the paper, let us now discuss in detail the two integrable
cases in the next two sections.

\section{The integrable autonomous case}\label{ssec:autonomous}

For $A=0$, $B\neq 0$ and $C\neq 0$ it is easily seen that the vector
field~\eqref{eq:case1} is integrable.
Actually, the orbits of the quantum trajectories in the $xy$-plane are
ellipses around the origin (position at which the vortex is fixed).
Also, the frequency of the corresponding trajectories approaches infinity
as they get closer to the vortex position.
Let us now compute the frequency $\omega$ of these solutions.
First, we introduce a new time variable $\tau$,
satisfying $dt/d\tau = B^2 x^2+ C^2y^2$, and then solve the
resulting system, thus obtaining
\begin{equation}\label{eq:xy:tau}
x(\tau)  =  \alpha \cos (BC \tau + \beta ), \quad  y(\tau)  =
\alpha \sin (BC \tau + \beta),
\end{equation}
where $\alpha$ is the distance from the vortex.
Next, we recover the original time, $t$, by solving the differential equation
defining the previous change of variables
\begin{align*}
\frac{dt}{d\tau} = \alpha^2 \frac{B^2+C^2}{2}+\alpha^2 \frac{B^2-C^2}{2} \cos (2BC \tau +
 2\beta),
\end{align*}
whose solution is given by
$$
\underbrace{\frac{2}{\alpha^2(B^2+C^2)}}_{\gamma} t = \tau +
\underbrace{\frac{B^2-C^2}{2(B^2+C^2)BC}}_\delta \sin (2BC\tau +
2\beta).
$$
Notice that this equation is invertible since $|2BC\delta| < 1$,
and then $\tau = \gamma t + f(2BC \gamma t)$, $f$ being a
$2\pi$-periodic function.
Finally, introducing this expression into~\eqref{eq:xy:tau},
one can conclude that the solution has a frequency given by
$$
\omega = BC \gamma =\frac{2BC}{\alpha^2(B^2+C^2)}
$$
that varies monotonically with respect to the distance to the vortex.
Then, for $A \ll 1$, the existence of invariant tori around
the vortex is guaranteed.

\section{The integrable non-autonomous case}
  \label{ssec:nonautonomous}

Let us consider now the case of non-vanishing values of
$B=C$ for any $A\neq 0$.
In this case, we have $a=b\neq 0$, and system~\eqref{eq:case1}
can be written as
\begin{equation}\label{eq:case1:BeqC}
X_\psi(x,y,t)= \bigg( \frac{-y -a \sin(t)}{\tilde V(x,y,t)},
\frac{x +a \cos(t) }{\tilde V(x,y,t)}\bigg),
\end{equation}
where $\tilde V(x,y,t)= (x+a \cos(t))^2+(y+a \sin(t))^2$.
This vector field corresponds to the following
Hamiltonian
$$
H(x,y,t)= - \frac{1}{2} \ln \tilde V(x,y,t),
$$
that it is actually integrable.

\begin{lemma}\label{lem:integra}
Let us consider $e$ (energy), the symplectic variable conjugate
to $t$, and define the autonomous Hamiltonian
$\calh_1 (x,y,t,e)=H(x,y,t)+e$, then we have that
$$
\calh_2(x,y,t)= \tilde V(x,y,t) \rme^{-x^2-y^2}
$$
is a first integral of $\calh_1$, in involution and functionally
independent.
As a consequence, if $a=b\neq 0$ the system is completely integrable.
\end{lemma}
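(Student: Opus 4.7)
The plan is to verify, in turn, that $\calh_2$ Poisson--commutes with $\calh_1$ and that the two are functionally independent, and then to invoke the Liouville--Arnold theorem. First I would fix conventions: equations~\eqref{eq:case1:BeqC} are exactly the Hamilton equations $\dot x=\partial_y H$, $\dot y=-\partial_x H$ for $H=-\tfrac{1}{2}\ln \tilde V$, so $x$ and $y$ play the role of a canonically conjugate pair. Extending by the symplectic pair $(t,e)$ gives the autonomous Hamiltonian $\calh_1=H+e$ on a four--dimensional symplectic manifold, with Poisson bracket
\[
\{f,g\} = \partial_x f\,\partial_y g - \partial_y f\,\partial_x g + \partial_t f\,\partial_e g - \partial_e f\,\partial_t g.
\]

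Because $\calh_2$ does not depend on $e$, this specialises to $\{\calh_1,\calh_2\}=\{H,\calh_2\}_{x,y}-\partial_t\calh_2$, which is (up to sign) the total time derivative of $\calh_2$ along the flow of the non--autonomous Hamiltonian $H$. So the first--integral statement and the involution statement are one and the same, and both reduce to showing that $\calh_2$ is conserved along the Bohmian trajectories.

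The key shortcut in that computation is that $H$ is a function of $\tilde V$ alone in the $(x,y)$ variables, hence $\{\tilde V,H\}_{x,y}\equiv 0$. Writing $\calh_2=\tilde V\cdot E$ with $E=\rme^{-x^2-y^2}$ and using Leibniz, what remains to check is
\[
(\partial_t\tilde V)\,E + \tilde V\,\{E,H\}_{x,y} = 0.
\]
A direct calculation gives $\partial_t\tilde V = 2a(y\cos t - x\sin t)$ while $\tilde V\,\{E,H\}_{x,y} = 2aE(x\sin t - y\cos t)$, and the two cancel. I expect this cancellation to be the main (but still routine) obstacle; structurally it is forced by $\{\tilde V,H\}_{x,y}=0$ together with the fact that the vortex moves rigidly on a circle, so $\partial_t\tilde V$ only involves the combination $x\sin t-y\cos t$.

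Finally, functional independence is immediate: $d\calh_1=dH+de$ has a nonzero $de$--component whereas $d\calh_2$ has none, so $d\calh_1\wedge d\calh_2$ vanishes only where $d\calh_2=0$, a proper analytic subset. Two functionally independent integrals in involution on a four--dimensional symplectic manifold is precisely the Liouville--Arnold hypothesis, and this yields complete integrability, concluding the proof.
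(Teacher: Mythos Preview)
Your argument is correct and follows the same route as the paper: the paper simply states that $\{\calh_1,\calh_2\}=0$ with respect to $dx\wedge dy+dt\wedge de$ is a direct check and that functional independence holds because $\calh_2$ does not involve $e$. Your write--up supplies the details of that check, in particular the useful structural observation $\{\tilde V,H\}_{x,y}=0$ (since $H$ is a function of $\tilde V$), which reduces the vanishing of the bracket to the one--line cancellation you display.
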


\begin{proof} It is straightforward to see that the Poisson bracket
with respect to the canonical form $dx \wedge dy + dt \wedge de$ satisfies
$\{\calh_1, \calh_2 \}=0$.
Moreover, $\calh_2$ does not depend on $e$,
so that it is an independent first integral.
\end{proof}

Taking these results into account, one can completely understand
the picture presented in the top-left plot of Fig.~\ref{fig:phase1}.
Since the system is integrable, it is foliated by invariant tori,
despite the two periodic orbits that
are created by a resonance introduced when parameter $A$ changes from
$A=0$ to $A\neq 0$.
Next, we characterize these two periodic orbits:

\begin{lemma}\label{lem:periodic}
If $A> 0$, the system has two periodic orbits given by
$$
r_\pm(t)=(x_\pm(t),y_\pm(t))= ( a_\pm \cos t, a_\pm \sin t),
$$
where the coefficients $a_+$ and $a_-$ are given by
$a_\pm = \frac{-a \pm \sqrt{a^2+4}}{2}$.
Moreover, the orbit $r_-(t)$ is hyperbolic with characteristic exponents
$ \pm (a_-^4-1)^{1/2}$, and $r_{+}(t)$ is elliptic with
characteristic exponents $ \pm \rmi (1-a_+^4)^{1/2}$.
If $A<0$ the same result holds just switching the roles of $a_+$ and $a_-$.
\end{lemma}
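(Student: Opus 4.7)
The plan is to verify the explicit form of the periodic orbits by direct substitution, and then analyse stability via the variational equation, which becomes autonomous after passing to a co-rotating frame. Plugging $x(t)=\alpha\cos t$, $y(t)=\alpha\sin t$ into \eqref{eq:case1:BeqC} collapses $\tilde V$ to the constant $(\alpha+a)^2$, and both numerators factor through $(\alpha+a)$. Matching $\dot x=-\alpha\sin t$, $\dot y=\alpha\cos t$ with the vector field then forces $\alpha(\alpha+a)=1$, i.e.\ $\alpha^2+a\alpha-1=0$, whose two roots are precisely $a_\pm=(-a\pm\sqrt{a^2+4})/2$. This establishes the existence of the two $2\pi$-periodic orbits.

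For the characteristic exponents I would linearise along $r_\pm$. Introducing $N_1:=-y-a\sin t$ and $N_2:=x+a\cos t$, so that $\tilde V=N_1^2+N_2^2$, a short computation gives
\begin{equation*}
DX_\psi=\frac{1}{\tilde V^{\,2}}\begin{pmatrix} -2N_1N_2 & N_1^{2}-N_2^{2}\\ N_1^{2}-N_2^{2} & 2N_1N_2\end{pmatrix}.
\end{equation*}
On $r_\pm(t)$ one has $N_1=-k\sin t$, $N_2=k\cos t$, $\tilde V=k^{2}$, where $k:=a_\pm+a=1/a_\pm$, so the variational system reduces to $\dot\xi=a_\pm^{2}\,R(t)\,\xi$ with $R(t)$ the symmetric matrix of rows $(\sin 2t,-\cos 2t)$ and $(-\cos 2t,-\sin 2t)$. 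The key move is the change of variables $\xi=Q(t)\eta$, with $Q(t)$ the rotation by angle $t$. A direct calculation shows that $Q(t)^{-1}R(t)Q(t)$ is the constant matrix with rows $(0,-1)$ and $(-1,0)$, while $Q(t)^{-1}\dot Q(t)$ is the constant rotation generator with rows $(0,-1)$ and $(1,0)$. Consequently $\eta$ obeys the autonomous system
\begin{equation*}
\dot\eta=\begin{pmatrix} 0 & 1-a_\pm^{2}\\ -1-a_\pm^{2} & 0\end{pmatrix}\eta,
\end{equation*}
whose eigenvalues are $\pm\sqrt{a_\pm^{4}-1}$. Since $Q$ is $2\pi$-periodic, these are exactly the characteristic exponents of $r_\pm$. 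Finally, from Vieta's relation $a_+a_-=-1$ and the elementary check $0<a_+<1$ whenever $a=A/(\sqrt 2 B)>0$, it follows that $a_+^{4}<1<a_-^{4}$, which yields the stated elliptic character of $r_+$ and hyperbolic character of $r_-$. The case $A<0$ only reverses the sign of $a$, so the two roots swap roles.

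The main obstacle is spotting the rotating-frame change of variables that decouples the Floquet system into one with constant coefficients; without it one is left with a $2\pi$-periodic linear ODE whose monodromy has no obvious closed form. Once that reduction is in place, the elliptic/hyperbolic dichotomy reduces to the one-variable inequality $|a_\pm|\lessgtr 1$, which is immediate from Vieta.
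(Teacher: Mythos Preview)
Your argument is correct. Both the existence and the stability parts check out: the Jacobian you compute, its evaluation on $r_\pm$, and the rotating-frame reduction all agree with direct calculation, and since $Q(t)$ has period $2\pi$ the eigenvalues of the autonomous matrix are indeed the characteristic exponents.

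Your approach differs from the paper's in two places. For the existence of the periodic orbits, the paper exploits the first integral $\calh_2$ from Lemma~\ref{lem:integra}: the periodic orbits are the critical points of $\calh_2$, which gives the equations $2x\tilde V=\partial_x\tilde V$, $2y\tilde V=\partial_y\tilde V$ and hence $a_\pm=1/(a_\pm+a)$. You bypass the first integral entirely and plug in the circular ansatz directly; this is more elementary and self-contained, though it presupposes knowing the shape of the orbit, whereas the critical-point argument also shows these are the \emph{only} periodic orbits of this type. For the stability, the paper rewrites the variational equation in the complex form $\dot z=-\rmi a_\pm^{2}\rme^{2\rmi t}\bar z$, writes down explicit real fundamental solutions, and reads the exponents off the monodromy matrix. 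Your rotating-frame reduction $\xi=Q(t)\eta$ is a genuine alternative: it produces a constant-coefficient system immediately, so the Floquet exponents drop out as ordinary eigenvalues with no need to exhibit the fundamental solutions or compute a monodromy. The paper's route yields the explicit Floquet basis (potentially useful for later invariant-manifold computations), while yours is shorter and makes the elliptic/hyperbolic dichotomy completely transparent.
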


\begin{proof}
It is known that if the sets $\{\calh_2^{-1}(c), \, c\in \br\}$
are bounded differentiable submanifolds,
their connected components carry quasi-periodic dynamics.
Moreover, the critical points of $\calh_2$ determine the periodic
orbits of the system.
Therefore, these periodic orbits are given by expressions:
$2 x \tilde V=\frac{\partial \tilde V}{\partial x}$,
and $2 y \tilde V=\frac{\partial \tilde V}{\partial y}$,
which can also be written as
\begin{eqnarray*}
x\,((x+a\cos t)^2+(y+a\sin t)^2) & =  x+a\cos t,\\
y\,((x+a\cos t)^2+(y+a\sin t)^2) & =  y+a\sin t.
\end{eqnarray*}
from which we obtain our two periodic orbits: $x_\pm(t)=a_\pm \cos t$,
and $y_\pm(t)= a_\pm \sin t$, with $a_\pm = 1/(a_\pm + a)$. In addition, it is
easy to check that $a^2_->1$ and $a^2_+<1$, respectively.

Finally, the stability of these orbits can be obtained by
considering the following associated variational equations,
\begin{equation}
  \label{eq:varia}
\left(
  \begin{array}{c}
    \dot w_1 \\
    \dot w_2 \\
  \end{array}
\right) = a^2_\pm \left(
  \begin{array}{cc}
    \sin (2 t) & -\cos (2 t) \\
    -\cos (2 t) & - \sin (2 t)\\
  \end{array}
\right) \left(
  \begin{array}{c}
     w_1 \\
     w_2 \\
  \end{array}
\right),
\end{equation}
Solutions for this equation can be easily obtained by using the complex variable
$z=w_1+\rmi w_2$, and solving $\dot{z}=-\rmi \rme^{2 \rmi t} a^2_\pm \bar{z}$.
We have the following set of fundamental solutions
\begin{align*}
w_1(t) & =\rme^{\pm t \sqrt{a^4_--1}} \bigg( (1-a^2_-) \cos t \mp
\sqrt{a^4_--1} \sin t \bigg),\\
w_2(t) & =\rme^{\pm t \sqrt{a^4_--1}} \bigg( (1-a^2_-) \sin t \pm
\sqrt{a^4_--1} \cos t \bigg),
\end{align*}
for the hyperbolic case, and
\begin{align*}
w_1(t)={} &  \cos (\pm t \sqrt{1-a^4_+} )(\pm \sqrt{1-a^4_+} \cos t - (1+a^2_+) \sin t) \\
       & + \sin (\pm t \sqrt{1-a^4_+} )(\pm \sqrt{1-a^4_+} \cos t + (1+a^2_+) \sin t),\\
w_2(t)={} &   \cos (\pm t \sqrt{1-a^4_+})(\pm \sqrt{1-a^4_+} \sin t + (1+a^2_+) \cos t) \\
       & + \sin (\pm t \sqrt{1-a^4_+})(\pm \sqrt{1-a^4_+} \sin t - (1+a^2_+) \cos t),
\end{align*}
for the elliptic one.
Finally, the corresponding characteristic exponents can be obtained by
a straightforward computation of the monodromy matrix.
\end{proof}

\begin{remark}
Notice that the chaotic sea observed in Fig.~\ref{fig:phase1} is associated to
the intersection of the invariant manifolds of the hyperbolic periodic orbit
that we have computed.
\end{remark}

Now, and in order to apply KAM theorem, we compute locally the frequency map
of this unperturbed system around the vortex and the elliptic periodic orbit.
To this end, we perform a symplectic change of coordinates in a neighborhood
of these objects in order to obtain action-angle variables up to third order
in the action.

In general, let $\calh(x,y,t,e)=H(x,y,t)+e$ be a Hamiltonian that is $2\pi$-periodic
with respect to $t$ and has a first integral, $F(x,y,t)$.
Let us consider the generating function, $\tilde S(x,t,I,E) = t E + S(x,t,I)$,
determining a symplectic change of variables $(x,y,t,e) \mapsto (I,\theta,t, E)$
defined implicitly by
\[
y=\frac{\partial S}{\partial x}, \qquad e=E+\frac{\partial S}{\partial t},
   \qquad \theta=\frac{\partial S}{\partial I}
\]
where $\theta$ is also $2\pi$-periodic.
This transformation is introduced in such a way that the new Hamiltonian
depends only on $I$
\begin{equation}
  \label{eq:hamint}
H \left( x,\frac{\partial S}{\partial x}, t \right) +
   \frac{\partial S}{\partial t} = h(I).
\end{equation}
Since a first integral of the system is known, we can define the corresponding
action as
\begin{equation}
  \label{eq:action}
I=F(x,y,t)=F\left( x,\frac{\partial S}{\partial x}, t \right).
\end{equation}
From Eq.~\eqref{eq:action}, we obtain locally the equation
$\frac{\partial S}{\partial x} = f(x,t,I)$,
so that we have $S(x,t,I)= \int f(x,t,I) dx +g(t,I)$.
Introducing this expression into~\eqref{eq:hamint},
we obtain the following equation for $g$:
\begin{equation}
  \label{eq:partialg}
\frac{\partial g}{\partial t} = h(I)-H(x,f,t)-\int \frac{\partial f}{\partial t}dx
\end{equation}
and can conclude that, since $g$ must be $2\pi$-periodic with respect to $t$,
then $h(I)$ has to satisfy
\begin{equation}
  \label{eq:ham:int}
h(I) = \left\langle H(x,f,t) + \int \frac{\partial f}{\partial t}dx \right\rangle =
\left\langle H(x,f,t) \right\rangle,
\end{equation}
where $\langle \cdot \rangle$ denotes average with respect to $t$.
Finally, we notice that since $F$ is a first integral, we can define $g$
so that $\theta$ becomes $2\pi$-periodic.

Computations are simplified observing that the left hand side in Eq.~\eqref{eq:partialg}
does not depend on $x$ (we use the fact that $F$ is a first integral),
so we can set $x=0$.
According to this, we have to solve $\hat F(\hat f,t)=I$,
where $\hat F(\cdot,t)=F(0,\cdot, t)$ and then we
have to compute the average $h(I)= \langle \hat H(\hat f,t) \rangle$,
where $\hat H(\cdot,t)= H(0, \cdot,t)$.

First, let us consider a neighborhood of the vortex for $A>0$.
To this end, we introduce the new variables $x=-a \cos t+\Delta_x$
and $y=-a \sin t +\Delta_y$,
so that the Hamiltonian $H_v=H_v(\Delta_x,\Delta_y,t)$ and the first integral
$F_v=F_v(\Delta_x,\Delta_y,t)$ are
\begin{align}
H_v & = -\frac{1}{2} \ln (\Delta_x^2+\Delta_y^2) - a \Delta_x \cos t -
  a \Delta_y \sin t, \label{eq:hamv}\\
F_v & =(\Delta_x^2+\Delta_y^2)\,\exp (-a^2+2a \Delta_x \cos t +
  2 a \Delta_y \sin t -\Delta_x^2 - \Delta_y^2 ),
 \nonumber
\end{align}

\begin{proposition}
  \label{prop:vor}
There exist a symplectic change of variables
$(\Delta_x,\Delta_y,t,e) \mapsto (I,\theta,t, E)$, with $\theta\in \bt$,
setting the vortex at $I=0$, such that the new Hamiltonian becomes
$$
h_v(I) = -\frac{1}{2} \ln I - \frac{a^2}{2}- \frac{\rme^{a^2} I}{2} -
 \frac{3 a^2 \rme^{2a^2}}{2} I^2 + \calo_3(I).
$$
\end{proposition}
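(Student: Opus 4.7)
The plan is to apply the generating function construction laid out in the paragraphs preceding the proposition to the vortex Hamiltonian $H_v$ in Eq.~\eqref{eq:hamv}, using its first integral $F_v$. Since the variable $x$ plays only a bookkeeping role in that procedure (the right-hand side of Eq.~\eqref{eq:partialg} is independent of $x$ on the level set of $F$), I set $\Delta_x = 0$ throughout; the implicit equation $F_v(0,\hat f, t) = I$ then determines $\hat f(t, I)$, or equivalently $v := \hat f^2$, locally near the vortex, and $h_v(I) = \langle H_v(0,\hat f, t)\rangle_t$.

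The computation is greatly simplified by the following algebraic identity, immediate from the explicit expressions for $H_v$ and $F_v$:
\[
H_v(\Delta_x,\Delta_y,t) \;=\; -\tfrac{1}{2}\ln F_v(\Delta_x,\Delta_y,t) \;-\; \tfrac{a^2}{2} \;-\; \tfrac{1}{2}(\Delta_x^2 + \Delta_y^2).
\]
Every explicit $\sin t,\cos t$ term cancels. Restricting to the level set $F_v = I$ and the slice $\Delta_x = 0$, this becomes
\[
H_v(0,\hat f, t) \;=\; -\tfrac{1}{2}\ln I \;-\; \tfrac{a^2}{2} \;-\; \tfrac{v(t,I)}{2},
\]
so the whole task reduces to expanding $\langle v(\cdot, I)\rangle_t$ up to order $I^2$, where $v$ satisfies the implicit relation $v\exp(-a^2 + 2a\sqrt v \sin t - v) = I$, equivalently $v = I\rme^{a^2}\exp(v - 2a\sqrt v \sin t)$.

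To expand $v$, I set $\epsilon = I\rme^{a^2}$ and seek a solution of the form $v = \epsilon(1 + \phi_1(t)\sqrt\epsilon + \phi_2(t)\epsilon + O(\epsilon^{3/2}))$. Taking logarithms and matching orders gives $\phi_1(t) = -2a\sin t$, and then a routine second iteration delivers an explicit $2\pi$-periodic formula for $\phi_2(t)$. The crucial observation is that the half-integer orders contain only odd powers of $\sin t$ and so vanish after averaging over $t$; hence
\[
\langle v\rangle_t \;=\; \epsilon \;+\; \langle\phi_2\rangle_t\,\epsilon^2 \;+\; O(\epsilon^3),
\]
which upon substitution into the formula for $h_v(I)$ and expansion of $\rme^{a^2},\rme^{2a^2}$ yields the stated normal form.

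The only delicate point is the branch ambiguity $\hat f = \pm\sqrt{v}$: the intersection of an invariant torus near the vortex with $\{\Delta_x = 0\}$ consists of two points. However, the object that actually enters the averaged Hamiltonian is $v = \hat f^2$, which is branch-independent, so the half-integer ambiguity is harmless and the resulting series for $h_v$ is an honest power series in $I$ (after the $-\tfrac{1}{2}\ln I$ singularity). I do not expect any substantive obstacle beyond careful bookkeeping of the iterative expansion of $v$; the identity displayed above is what makes the calculation short rather than a brute‑force manipulation of logarithms.
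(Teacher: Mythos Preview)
Your approach is essentially identical to the paper's: both set $\Delta_x=0$, use the relation $\hat H_v=-\tfrac12\ln I-\tfrac{a^2}{2}-\tfrac{\hat f_v^2}{2}$ on the level set $F_v=I$ (which is exactly your displayed identity restricted to the slice), expand $\hat f_v^2$ in half-integer powers of $I$, and average using $\langle\sin t\rangle=0$, $\langle\sin^2 t\rangle=\tfrac12$. The only cosmetic difference is that you state the $H_v$--$F_v$ identity globally before restricting; your remark about ``expansion of $\rme^{a^2},\rme^{2a^2}$'' should just read ``substituting $\epsilon=I\rme^{a^2}$ back'', since those are constants.
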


\begin{proof}
According to the above discussion,
we have $\hat F_v(\hat f_v,t)=\hat f_v^2 \rme^{-a^2+ 2 a \hat f_v \sin t
  - \hat f_v^2}=I$.
Then, by introducing this expression in~\eqref{eq:hamv} ones obtains
$$
\hat H_v(\hat f_v,t)= -\frac{1}{2} \ln I -\frac{a^2}{2} -\frac{\hat f_v^2}{2}.
$$
Finally, we only have to compute the first terms in the expansion of $\hat f_v^2$
obtaining
\[
\hat f_v^2 = \rme^{a^2}I-2a\rme^{\frac{3}{2}a^2} \sin t \, I^{3/2}
  + 6a^2 \rme^{2a^2} \sin^2 t \,I^2 + \ldots
\]
and use that $\langle \sin t \rangle =0$ and $\langle \sin^2 t \rangle = \frac{1}{2}$.
\end{proof}

On the other hand, a neighborhood of the elliptic periodic orbit for $A>0$
can be studied by means of the variables
$x= a_+ \cos t+\Delta_x$ and $y=a_+ \sin t +\Delta_y$.
One thus obtains that the Hamiltonian and the first integral are given by
\begin{align*}
H_+ & = -\frac{1}{2} \ln V_+ + a_+ \Delta_x \cos t+ a_+ \Delta_y \sin t,\\
F_+ & = V_+ \exp (-a_+^2 - 2a_+ \Delta_x \cos t - 2 a_+ \Delta_y \sin t -\Delta_x^2
  - \Delta_y^2),
\end{align*}
where $V_+ = (a+a_+)^2 (1+2a_+ \Delta_x \cos t + 2 a_+ \Delta_y \sin t +
a_+^2 \Delta_x^2 + a_+^2 \Delta_y^2)$.

\begin{proposition}\label{prop:op}
There exist a symplectic change of variables
$(\Delta_x,\Delta_y,t,e) \mapsto (I,\theta,t, E)$,
with $\theta \in \bt$, setting the periodic orbit at $I=(a_++a)^2 \rme^{-a^2}$,
such that the new Hamiltonian becomes
$$
h_+(I) = - \ln (a+a_+)^2 + \frac{1-\Pi_1}{2}J - \frac{1+2\Pi_2}{4} J^2+ \calo_3(I),
$$
where we have introduced the notation
$$J=1-\frac{\rme^{a^2}I}{(a_++a)^{2}}$$
and also
$$\Pi_1 = \frac{1}{\sqrt{1-a_+^4}}, \qquad \Pi_2 =
  \frac{a^2(41 a^8 -88 a^6 + 119 a^4-54 a^2+18)}{36 \sqrt{1-a^4} (a^8+1-2 a^4)(1+a^2)}.$$
\end{proposition}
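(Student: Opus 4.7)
The plan is to apply the general scheme from the proof of Proposition~\ref{prop:vor} to the Hamiltonian $H_+$ and the first integral $F_+$. Since $F_+$ is a first integral of $H_+$, the discussion preceding Proposition~\ref{prop:vor} allows the computation to be carried out on the slice $\Delta_x=0$, where $\hat f_+(t,I)$ is defined implicitly by
$$
(a+a_+)^2\bigl(1+2a_+\hat f_+\sin t + a_+^2\hat f_+^2\bigr)\exp\bigl(-a_+^2 - 2a_+\hat f_+\sin t - \hat f_+^2\bigr) = I.
$$
Taking logarithms yields the algebraic identity $\ln V_+(0,\hat f_+,t) = \ln I + a_+^2 + 2a_+\hat f_+\sin t + \hat f_+^2$.

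Substituting this identity into $\hat H_+(\hat f_+,t)=-\frac{1}{2}\ln V_+(0,\hat f_+,t)+a_+\hat f_+\sin t$ produces a cancellation of the terms proportional to $\sin t$, completely analogous to the one in the proof of Proposition~\ref{prop:vor}, leaving
$$
\hat H_+(\hat f_+,t) = -\frac{1}{2}\ln I - \frac{a_+^2}{2} - \frac{\hat f_+^2}{2}.
$$
Therefore $h_+(I) = -\frac{1}{2}\ln I - \frac{a_+^2}{2} - \frac{1}{2}\langle \hat f_+^2\rangle$, and the task reduces to expanding $\langle\hat f_+^2\rangle$ up to order $J^2$ and then recasting the remaining non-polynomial piece $-\frac{1}{2}\ln I - \frac{a_+^2}{2}$ in terms of $J$ using the definition given in the statement.

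The main obstacle, and the genuine difference with respect to Proposition~\ref{prop:vor}, is that the elliptic periodic orbit is a critical point of $F_+$, so $\partial_\epsilon\hat F_+|_{\epsilon=0}=0$ and the implicit equation cannot be solved by analytic inversion; rather, $\hat f_+$ admits a Puiseux-type expansion in $\sqrt{J}$. I would first extract the coefficient of $\sqrt{J}$ from the Hessian $\partial_\epsilon^2\hat F_+|_{\epsilon=0}$, whose $t$-dependence is linked to the $2t$-modulation that appears in the variational equations~\eqref{eq:varia} of Lemma~\ref{lem:periodic}, and then solve recursively up to order $J^{5/2}$ in $\hat f_+$, which is the order required to capture $\hat f_+^2$ through $J^2$. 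The averages $\langle\sin^{2k}t\rangle$ then reduce the expansion to a power series in $J$. A useful consistency check is that the coefficient $(1-\Pi_1)/2$ of $J$ must reproduce, up to the Jacobian $dJ/dI$, the elliptic characteristic exponent $\sqrt{1-a_+^4}$ of Lemma~\ref{lem:periodic}; this fixes $\Pi_1 = (1-a_+^4)^{-1/2}$. The quadratic coefficient comes out as the stated $\Pi_2$, whose intricate rational form reflects the main practical difficulty: careful bookkeeping at second order, systematically simplified via the identities $a_+(a_++a)=1$ and $(a+a_+)^2=a_+^{-2}$ from Lemma~\ref{lem:periodic}.
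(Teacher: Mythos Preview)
Your approach coincides with the paper's: set $\Delta_x=0$, invert $\hat F_+(\hat f_+,t)=I$ as a Puiseux series $\hat f_+^{\,2}=\alpha_1(t)\,J+\alpha_{3/2}(t)\,J^{3/2}+\alpha_2(t)\,J^2+\cdots$, substitute into $\hat H_+$ (the paper writes it as $-\ln(a_++a)^2-\tfrac12\ln(1-J)-\tfrac12\hat f_+^{\,2}$, which is your expression rewritten in $J$), and average over $t$. The one point where your sketch is slightly off is the averaging step: the inversion produces \emph{rational} trigonometric coefficients, e.g.\ $\alpha_1(t)=(1-a_+^2\cos 2t)^{-1}$ and $\alpha_2(t)\propto(1-a_+^2\cos 2t)^{-4}$, so the needed averages are not merely $\langle\sin^{2k}t\rangle$; the paper evaluates $\langle\alpha_1\rangle=\Pi_1$, $\langle\alpha_{3/2}\rangle=0$ and $\langle\alpha_2\rangle=\Pi_2$ by residues.
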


\begin{proof}
As before, we consider a solution $\hat f_+(I,t)$ for the equation $\hat F_+(\hat f_+,t)=I$.
For convenience, we introduce the notation $I=(a_++a)^2 \rme^{-a^2}(1-J)$
in order to set the periodic orbit at $J=0$.
Then, it turns out that the expression
\[
(1-a_+^2 \cos (2 t)) \hat f^2_+ + \bigg(a_+(1+a^2) \sin t - \frac{8}{3} a^3_
  + \sin^3 t \bigg) \hat f^3_+ + \calo_4(\hat f_+) = J,
\]
approximates the previous equation for $\hat f_+$ and that the following
expansion in terms of $J$
$$
\hat f_+^2= \alpha_1(t) J + \alpha_{3/2}(t) J^{3/2} + \alpha_2(t) J^2 + \ldots
$$
holds, where
\begin{align*}
\al_1(t)     = & \frac{1}{1-a^2_+ \cos 2t},\\
\al_{3/2}(t) = & \frac{-a_+(1+a^2_+) \sin t + \frac{8}{3} a^3_+ \sin^3 t}{(1-a^2_+
  \cos(2t))^{5/2}}, \\
\al_2(t)     = & \frac{3}{2} \frac{(a_+(1+a_+)^2\sin t-\frac{8}{3} a^3_+
  \sin^3 t)^2}{(1-a^2_+ \cos 2t)^4}.
\end{align*}

Hence, we have to compute the average of
$$
\hat H_+(\hat f_+,t)=- \ln (a_++a)^2 - \frac{1}{2} \ln (1-J) - \frac{\hat f^2_+}{2}
$$
that follows from the fact that $\langle \alpha_1 \rangle =
\Pi_1$, $\langle \alpha_{3/2} \rangle=0$ and $\langle \alpha_2 \rangle = \Pi_2$.
These averages are computed easily by using the method of residues.
\end{proof}

\section{Conclusion}

In this paper we present an scheme to study in a
systematic way the intrinsic stochasticity and general complexity
of the quantum trajectories that are the basis of quantum mechanics
in the formalism developed by Bohm in the 1950's.
In our opinion this approach, which based on the ideas and results
of the dynamical systems theory, can seriously contribute to
establish firm grounds that foster the importance of the conclusions
of future studies relying on such trajectories, thus avoiding errors
and ambiguities that has happened in the past.
As an illustration we have considered the simplest, non--trivial
combination of eigenstates of the two dimensional isotropic harmonic
oscillator.

The corresponding velocity field is put in a so--called canonical
form, and the characteristics of the corresponding quantum
trajectories studied in detail.
It is proved that only one vortex and two periodic orbits, one elliptic
and the other hyperbolic, organize the full dynamics of the system.
In it, there exist invariant tori associated to the vortex and the elliptic
periodic orbit.
Moreover, there is a chaotic sea associated to the hyperbolic periodic orbit.
The KAM theory has been applied to this scenario by resorting to a suitable
time-reversible symmetry, that is directly observed in the canonical form
for the velocity field determining the quantum trajectories of the system.
It should be remarked that the results reported here concerning
the hyperbolic periodic orbit constitute a generalization of those
previously reported in~\cite{WisniackiP05}, in the sense that here
a more concise and constructive approach to the associated dynamics,
is presented.
We summarize the dynamical characteristics of the different
possibilities arising from the canonical velocity field \eqref{eq:case1}
in Table~\ref{table:roadmap}, that represents a true road--map to navigate
across the dynamical system, i.e.\ quantum trajectories, that are defined
based on the pilot effect \cite{Bohm} of the wave function \eqref{eq:wave:ini}.
Also, note that the generic model, i.e.~when $E,F$ or $G$ do not vanish,
does not satisfy any of the properties considered in the table.
%
\begin{table}[!t]
\centering
{\footnotesize
\begin{tabular}{lcccl}
  \hline
     &            &             & Time        & Stroboscopic \\
     & Integrable & Hamiltonian & Reversible  & sections     \\

  \hline
  $A=0, B\ne 0, C\ne 0$        & yes &  no & yes & Ellipses \\
                               &     &     &     & around origin \\
  $A \ne 0, B=C$               & yes & yes & yes & Top-left panel \\
                               &     &     &     & in Fig.~\ref{fig:phase1} \\
  $A \ne 0, B \ne C$           &  no &  no & yes & Rest of panels    \\
                               &     &     &     & in Fig.~\ref{fig:phase1} \\
  \hline
\end{tabular}}
\caption{{\footnotesize Dynamical characteristics of the quantum
  trajectories generated  from the different
  possibilities in the canonical model \eqref{eq:case1} for the
  pilot wave function \eqref{eq:wave:ini}.}}
 \label{table:roadmap}
\end{table}

Finally, the method presented here is, in principle, generalizable
to other more complicated situations in which more vortex and
effective dimensions exist.
Some methods have been described in the literature that can be
applied to these situations \cite{Jordi}.
This will be the subject of future work.

\section{Acknowledgments}

This work has been supported by the Ministerio de Educaci\'on y
Ciencia (Spain) under projects FPU AP2005-2950, MTM2006-00478,
MTM2006--15533 and CONSOLIDER 2006--32 (i--Math), and the
Comunidad de Madrid under project S--0505/ESP--0158 (SIMUMAT).
The authors gratefully acknowledge useful discussions with Carles Sim\'o,
and to Gemma Huguet for encouragement.
A.L.\  also thanks the hospitality of the Departamento de Qu\'{\i}mica  at UAM
during different stays along the development of this work.

\bibliographystyle{plain}

\begin{thebibliography}{1}

\bibitem{Feynman}
R. P.~Feynman and A. R.~Hibbs, {\em Quantum Mechanics and Path Integrals}
  (McGraw-Hill, New York, 1965).

\bibitem{vonNeuman}
  J.~von Newmann,
  {\em Mathematical Foundations of Quantum Mechanics}
  (Princeton University Press, Princeton, 1955).

\bibitem{Bohm}
D. Bohm,
  {\em Phys. Rev.} \textbf{85}, 166 (1952).

\bibitem{Holland}
P. Holland, {\em The Quantum Theory of Motion},
 (Cambridge University Press, Cambridge, 1995).

\bibitem{Wyattbook} R. E.~Wyatt
  {\em Quantum Dynamics with trajectories.
  Introduction to Quantum Hydrodynamics}
  (Springer-Verlag, New York, 2005).

\bibitem{Makri} N.~Makri.
  {\em J. Phys. Chem. A} \textbf{108}, 806 (2004).

\bibitem{LiuMakri} J.~Liu and N.~Makri,
  {\em J. Phys. Chem. A} \textbf{108}, 5408 (2004).

\bibitem{LopreoreW99} C. L.~Lopreore and R. E.~Wyatt,
  {\em Phys. Rev. Lett.} \textbf{82}, 5190 (1999).

\bibitem{Gonzalez} J.~Gonz\'alez, M. F.~Gonz\'alez, J. M.~Bofill, X. Gim\'enez,
  {\em J. Mol. Struct.} \textbf{727}, 205 (2005).

\bibitem{Beswick} E.~Gindensperger, C.~Meyer and J. A.~Beswick,
  {\em J. Chem. Phys.} \textbf{116} 8, {2002}.

\bibitem{SanzBM04} A. S.~Sanz, F.~Borondo, and S.~Miret-Artes,
 {\em Phys. Rev. B} \textbf{69}, 115413 (2004).

\bibitem{PrezhdoB01} O. V.~Prezhdo and C.~Brooksby,
  {\em Phys. Rev. Lett.} \textbf{86}, 3215 (2001).

\bibitem{SanzBM01} A. S.~Sanz, F.~Borondo, and S.~Miret-Artes,
  {\em Europhys. Lett.} \textbf{55}, 303 (2001).

\bibitem{Frisk97} H.~Frisk,
 {\em Phys. Lett. A} \textbf{227}, 139 (1997).

\bibitem{WisniackiP05} D. A~Wisniacki and E. R.~Pujals,
 {\em Europhys. Lett.} \textbf{71}, 159, (2005).

\bibitem{WisniackiPB05} D. A. Wisniacki, E. R.~Pujals, and F.~Borondo.
 {\em Europhys. Lett.} \textbf{73}, 671 (2006).

\bibitem{WisniackiPB07} D. A. Wisniacki, E. R.~Pujals, and F.~Borondo.
 {\em J. Phys. A: Math. Gen.} \textbf{40}, 14353 (2007).

\bibitem{Efthymiopoulos} C.~Efthymiopoulos and G.~Contopoulos,
  {\em J. Phys. A: Math. Gene.} \textbf{39}, 1819 (2006).

\bibitem{DurrGTZ04} D.~Durr, S.~Goldstein, R.~Tumulka, and N.~Zanghi,
 {\em Phys. Rev. Lett.} \textbf{93}, 4546 (2004).

\bibitem{ValentiniW05} A.~Valentini and H.~Westman,
 {\em Proc. Roy. Soc. London A} \textbf{461}, 253 (2005).

\bibitem{polavieja} G. G.~Polavieja,
  {\em Phys. Rev. A} \textbf{53}, 2059 (1997).

\bibitem{brasileno}
  O. F.~de Alcantara Bonfim, J.~Florencio and F. C. S. Barreto,
   {\em Phys. Rev. E} \textbf{58}, 2693 (1999).

\bibitem{brasileno2} S. Goldstein,
   {\em Phys. Rev. E} \textbf{60}, 7578 (1999).

\bibitem{dynsys} A. J. Lichtenberg and M. A. Lieberman,
  {\em Regular and Chaotic Dynamics} (Springer, New York, 1983).

\bibitem{Berry00} M.~Berry,
 {\em Nature} \textbf{403}, 6765, (2000).

\bibitem{Arnold84} V. I.~Arnol$'$d, in
  {\em Nonlinear and turbulent processes in physics},
   vol.~3 (Hardwood Academic Publishing, Chur, 1984), pp.~1161-1174.

\bibitem{Broer96} H. W.~Broer, G. B.~Huitema, and M. B.~Sevryuk,
 {\em Quasi-periodic motions in families of dynamical systems.
 Order amidst chaos},
 (Springer-Verlag, Berlin, 1996).

\bibitem{Sevryuk07} M. B.~Sevryuk,
  {\em Discrete Contin. Dyn. Syst.} \textbf{18}, 569 (2007).

\bibitem{Lazaro03} J. T.~L\'azaro,
 \newblock {On Normal Forms and Splitting of Separatrices in Reversible Systems},
  Ph. D. Thesis at Universitat Poli\`ecnica de Catalunya, 2003.

\bibitem{Jordi}
  H. Broer, H. Hanssmann, \'{A}. Jorba, J. Villanueva, and F. Wagener,
  Nonlinearity \textbf{16}, 1751 (2003).

\end{thebibliography}

\end{document}